\title{ADDMC: Weighted Model Counting with Algebraic Decision Diagrams}
\author{
  \Large \textbf{Jeffrey M. Dudek, Vu H. N. Phan, Moshe Y. Vardi}
  \\
  Rice University \\ 
  6100 Main Street \\
  Houston, Texas 77005 \\
  \{jmd11,vhp1,vardi\}@rice.edu 
}
\setlist[enumerate]{label*=\arabic*.}
\theoremstyle{definition} 
\newcommand{\pars}[1]{\left( #1 \right)}
\newcommand{\size}[1]{\left| #1 \right|}
\newcommand{\set}[1]{\left\{ #1 \right\}}
\newcommand{\sete}{\varnothing}
\newcommand{\Set}[1]{\mathbb{#1}}
\newcommand{\R}{\Set{R}}
\newcommand{\Z}{\Set{Z}}
\newcommand{\N}{\Z^+}
\newcommand{\wrt}{w.r.t.}
\newcommand{\eg}{e.g.}
\newcommand{\ie}{i.e.}
\newcommand{\textdef}[1]{\textit{#1}}
\newcommand{\heuristic}[1]{\textbf{#1}}
\newcommand{\Mono}{\heuristic{Mono}}
\newcommand{\Be}{\heuristic{BE}}
\newcommand{\Bm}{\heuristic{BM}}
\newcommand{\List}{\heuristic{List}}
\newcommand{\Tree}{\heuristic{Tree}}
\newcommand{\Random}{\heuristic{Random}}
\newcommand{\Mcs}{\heuristic{MCS}}
\newcommand{\Lexp}{\heuristic{LexP}}
\newcommand{\Lexm}{\heuristic{LexM}}
\newcommand{\Invmcs}{\heuristic{InvMCS}}
\newcommand{\Invlexp}{\heuristic{InvLexP}}
\newcommand{\Invlexm}{\heuristic{InvLexM}}
\newcommand{\heuristicConfigCount}{245} 
\newcommand{\urlBenchmarksBayes}{\url{https://www.cs.rochester.edu/u/kautz/Cachet/}}
\newcommand{\urlBenchmarksPseudoweighted}{\url{http://www.cril.univ-artois.fr/KC/benchmarks.html}}
\newcommand{\class}[1]{\textbf{#1}}
\newcommand{\family}[1]{\textit{#1}}
\newcommand{\classBayes}{\class{Bayes}}
\newcommand{\classPseudoweighted}{\class{Non-Bayes}}
\newcommand{\famBmc}{\family{Bounded Model Checking (BMC)}}
\newcommand{\famCircuit}{\family{Circuit}}
\newcommand{\famConfig}{\family{Configuration}}
\newcommand{\famHandmade}{\family{Handmade}}
\newcommand{\famPlanning}{\family{Planning}}
\newcommand{\famQif}{\family{Quantitative Information Flow (QIF)}}
\newcommand{\famRandom}{\family{Random}}
\newcommand{\famSchedule}{\family{Scheduling}}
\newcommand{\benchmarkCountBayes}{1091}
\newcommand{\benchmarkCountPseudoweighted}{823}
\newcommand{\benchmarkCountAltogether}{1914}
\newcommand{\benchmarkCountUnique}{124}
\newcommand{\benchmarkCountFastest}{763}
\newcommand{\benchmarkCountFastestNonUnique}{639}
\newcommand{\benchmarkCountSolved}{1404}
\newcommand{\benchmarkCountMavc}{1906}
\newcommand{\timeoutAllHeuristics}{10}
\newcommand{\timeoutCounters}{1000}
\newcommand{\timeoutMavc}{10000}
\newcommand{\minMavc}{4}
\newcommand{\maxSolvedMavc}{246}
\newcommand{\tool}[1]{\texttt{#1}}
\newcommand{\addmc}{\tool{ADDMC}}
\newcommand{\cachet}{\tool{Cachet}}
\newcommand{\ctd}{\tool{c2d}}
\newcommand{\cudd}{\tool{CUDD}}
\newcommand{\df}{\tool{d4}}
\newcommand{\minictd}{\tool{miniC2D}}
\newcommand{\reasoner}{\tool{d-DNNF-reasoner}}
\newcommand{\sylvan}{\tool{Sylvan}}
\newcommand{\besto}{\textit{Best1}}
\newcommand{\bestt}{\textit{Best2}}
\newcommand{\vbs}{\tool{VBS1}}
\newcommand{\vbss}{\tool{VBS0}}
\newcommand{\weightedCounters}{\cachet, \ctd, \df, and \minictd}
\newtheorem{definition}{Definition}
\newtheorem{lemma}{Lemma}
\newtheorem{theorem}{Theorem}
\newcommand{\mult}{\cdot}
\newcommand{\func}[1]{\texttt{#1}}
\newcommand{\get}[1]{\func{get-#1}}
\newcommand{\getClusterVarOrder}{\get{cluster-var-order}}
\newcommand{\getDiagramVarOrder}{\get{diagram-var-order}}
\newcommand{\chooseCluster}{\func{choose-cluster}}
\newcommand{\vars}{\texttt{Vars}}
\begin{document}


\maketitle


\begin{abstract}
We present an algorithm to compute exact literal-weighted model counts of Boolean formulas in Conjunctive Normal Form.
Our algorithm employs dynamic programming and uses Algebraic Decision Diagrams as the main data structure.
We implement this technique in \addmc, a new model counter.
We empirically evaluate various heuristics that can be used with \addmc.
We then compare \addmc{} to four state-of-the-art weighted model counters (\weightedCounters) on \benchmarkCountAltogether{} standard model counting benchmarks and show that \addmc{} significantly improves the virtual best solver.
\end{abstract}


\section{Introduction}

\textdef{Model counting} is a fundamental problem in artificial intelligence, with applications in machine learning, probabilistic reasoning, and verification \cite{DH07,BHM09,NRJKVMS07}.
Given an input set of constraints, with the focus in this paper on Boolean constraints, the model counting problem is to count the number of satisfying assignments.
Although this problem is \#P-Complete \cite{valiant79}, a variety of tools exist that can handle industrial sets of constraints, cf. \cite{SBBK04,OD15}.

Dynamic programming is a powerful technique that has been applied across computer science \cite{howard66}, including to model counting \cite{BDP09,SS10}.
The key idea is to solve a large problem by solving a sequence of smaller subproblems and then incrementally combining these solutions into the final result.
Dynamic programming provides a natural framework to solve a variety of problems defined on sets of constraints: subproblems can be formed by partitioning the constraints into sets, called \textdef{clusters}.
This framework has also been instantiated into algorithms for database-query optimization \cite{MPPV04} and SAT-solving \cite{US94,AV01,PV04}.
Techniques for local computation can also be seen as a variant of this framework, \eg, in theorem proving \cite{wilson1999logical} or probabilistic inference \cite{shenoy2008axioms}.

In this work, we study two algorithms that follow this dynamic-programming framework and can be adapted for model counting: \textdef{bucket elimination} \cite{dechter99} and \textdef{Bouquet's Method} \cite{bouquet99}.
Bucket elimination aims to minimize the amount of information needed to be carried between subproblems.
When this information must be stored in an uncompressed table, bucket elimination will, with some carefully chosen sequence of clusters, require the minimum possible amount of intermediate data, as governed by the treewidth of the input formula \cite{BDP09}.
Intermediate data, however, need not be stored uncompressed.
Several works have shown that using compact representations of intermediate data can dramatically improve bucket elimination for Bayesian inference \cite{PZ03,sanner2005affine,CD07}.
Moreover, it has been observed that using compact representations --- in particular, {Binary Decision Diagrams (BDDs)} --- can allow Bouquet's Method to outperform bucket elimination for SAT-solving \cite{PV04}.
Compact representations are therefore promising to improve existing dynamic-programming-based algorithms for model counting \cite{BDP09,SS10}.

In particular, we consider the use of \textdef{Algebraic Decision Diagrams (ADDs)} \cite{BFGHMPS97} for model counting in a dynamic-programming framework.
An ADD is a compact representation of a real-valued function as a directed acyclic graph.
For functions with logical structure, an ADD representation can be exponentially smaller than the explicit representation.
ADDs have been successfully used as part of dynamic-programming frameworks for Bayesian inference \cite{CD07,gogate2012approximation} and stochastic planning \cite{HSHB99}.
Although ADDs have been used for model counting outside of a dynamic-programming framework \cite{fargier2014knowledge}, no prior work uses ADDs for model counting as part of a dynamic-programming framework.

The construction and resultant size of an ADD depend heavily on the choice of an order on the variables of the ADD, called a \textdef{diagram variable order}.
Some variable orders may produce ADDs that are exponentially smaller than others for the same real-valued function.
A variety of techniques exist in prior work to heuristically find diagram variable orders \cite{tarjan1984simple,koster2001treewidth}.
In addition to the diagram variable order, both bucket elimination and Bouquet's Method require another order on the variables to build and arrange the clusters of input constraints; we call this a \textdef{cluster variable order}.
We show that the choice of heuristics to find cluster variable orders has a significant impact on the runtime performance of both bucket elimination and Bouquet's Method.

The primary contribution of this work is a dynamic-programming framework for weighted model counting that utilizes ADDs as a compact data structure.
In particular:
\begin{enumerate}
  \item We lift the BDD-based approach for Boolean satisfiability of \cite{PV04} to an ADD-based approach for weighted model counting.
  \item We implement this algorithm using ADDs and a variety of existing heuristics to produce \addmc{}, a new weighted model counter.
  \item We perform an experimental comparison of these heuristic techniques in the context of weighted model counting.
  \item We perform an experimental comparison of \addmc{} to four state-of-the-art weighted model counters (\weightedCounters) and show that \addmc{} improves the virtual best solver on 763 of \benchmarkCountAltogether{} benchmarks.
\end{enumerate}




\section{Preliminaries}

In this section, we introduce weighted model counting, the central problem of this work, and Algebraic Decision Diagrams, the primary data structure we use to solve weighted model counting.

\subsection{Weighted Model Counting}

The central problem of this work is to compute the weighted model count of a Boolean formula, which we now define.

\begin{definition}
\label{def_wmc}
  Let $\varphi: 2^X \rightarrow \set{0, 1}$ be a Boolean function over a set $X$ of variables, and let $W: 2^X \rightarrow \mathbb{R}$ be an arbitrary function.
  The \textdef{weighted model count} of $\varphi$ \wrt{} $W$ is
  $$W(\varphi) = \sum_{\tau \in 2^X} \varphi(\tau) \mult W(\tau).$$
\end{definition}

The function $W: 2^X \rightarrow \mathbb{R}$ is called a \textdef{weight function}.
In this work, we focus on so-called \textdef{literal-weight functions}, where the weight of a model can be expressed as the product of weights associated with all satisfied literals.
That is, where the weight function $W$ can be expressed, for all $\tau \in 2^X$, as
$$W(\tau) = \prod_{x \in \tau} W^+(x) \mult \prod_{x \in X \setminus \tau} W^-(x)$$
for some functions $W^+(x), W^-(x): X \rightarrow \mathbb{R}$.
One can interpret these literal-weight functions $W$ as assigning a real-valued weight to each literal: $W^+(x)$ to $x$ and $W^-(x)$ to $\neg x$.
It is common to restrict attention to weight functions whose range is $\R$ or just the interval $[0, 1]$.

When the formula $\varphi$ is given in \textdef{Conjunctive Normal Form (CNF)}, computing the literal-weighted model count is \#P-Complete \cite{valiant79}.
Several algorithms and tools for weighted model counting directly reason about the CNF representation.
For example, \cachet{} uses DPLL search combined with component caching and clause learning to perform weighted model counting \cite{SBBK04}.

If $\varphi$ is given in a compact representation --- \eg, as a Binary Decision Diagram (BDD) \cite{bryant86} or as a Sentential Decision Diagram (SDD) \cite{darwiche11} --- computing the literal-weighted model count can be done in time polynomial in the size of the representation.
One recent tool for weighted model counting that exploits this is \minictd{}, which compiles the input CNF formula into an SDD and then performs a polynomial-time count on the SDD \cite{OD15}.
Although usually more succinct than a truth table, such compact representations may still be exponential in the size of the CNF formula in the worst case \cite{bova2016knowledge}.

\subsection{Algebraic Decision Diagrams}

The central data structure we use in this work is \textdef{Algebraic Decision Diagram (ADD)} \cite{BFGHMPS97}, a compact representation of a function as a directed acyclic graph.
Formally, an ADD is a tuple $(X, S, \pi, G)$, where $X$ is a set of Boolean variables, $S$ is an arbitrary set (called the \textdef{carrier set}), $\pi: X \rightarrow \N$ is an injection (called the \textdef{diagram variable order}), and $G$ is a rooted directed acyclic graph satisfying the following three properties.
First, every terminal node of $G$ is labeled with an element of $S$.
Second, every non-terminal node of $G$ is labeled with an element of $X$ and has two outgoing edges labeled 0 and 1.
Finally, for every path in $G$, the labels of the visited non-terminal nodes must occur in increasing order under $\pi$.
ADDs were originally designed for matrix multiplication and shortest path algorithms \cite{BFGHMPS97}.
ADDs have also been used for stochastic model checking \cite{KNP07} and stochastic planning \cite{HSHB99}.
In this work, we do not need arbitrary carrier sets; it is sufficient to consider ADDs with $S = \mathbb{R}$.

An ADD $(X, S, \pi, G)$ is a compact representation of a function $f: 2^X \rightarrow S$.
Although there are many ADDs representing each such function $f$, for each injection $\pi: X \rightarrow \N$, there is a unique minimal ADD that represents $f$ with $\pi$ as the diagram variable order, called the \textdef{canonical ADD}.
ADDs can be minimized in polynomial time, so it is typical to only work with canonical ADDs.
Given two ADDs representing functions $f$ and $g$, the ADDs representing $f + g$ and $f \mult g$ can also be computed in polynomial time.

The choice of diagram variable order can have a dramatic impact on the size of the ADD.
A variety of techniques exist to heuristically find diagram variable orders.
Moreover, since Binary Decision Diagrams (BDDs) \cite{bryant86} can be seen as ADDs with carrier set $S = \set{0, 1}$, there is significant overlap with the techniques to find variable orders for BDDs.

Several packages exist for efficiently manipulating ADDs.
Here we use the package \cudd{} \cite{somenzi09}, which supports carrier sets $S = \set{0, 1}$ and (using floating-point arithmetic) $S = \mathbb{R}$.
\cudd{} implements several ADD operations, such as addition, multiplication, and projection.



\section{Using ADDs for Weighted Model Counting with Early Projection}
\label{sec_theory}

An ADD with carrier set $\R$ can be used to represent both a Boolean formula $\varphi: 2^X \to \set{0, 1}$ and a weight function $W: 2^X \to \R$.
ADDs are thus a natural candidate as a data structure for weighted model counting algorithms.

In this section, we outline theoretical foundations for performing weighted model counting with ADDs.
We consider first the general case of weighted model counting.
We then specialize to literal-weighted model counting of CNF formulas and show how the technique of early projection can take advantage of such factored representations of Boolean formulas $\varphi$ and weight functions $W$.

\subsection{General Weighted Model Counting}

We assume that the Boolean formula $\varphi$ and the weight function $W$ are represented as ADDs.
The goal is to compute $W(\varphi)$, the weighted model count of $\varphi$ \wrt{} $W$.
To do this, we define two operations on functions $2^X \to \R$ that can be efficiently computed using the ADD representation: \textdef{product} and \textdef{projection}.
These operations are combined in Theorem \ref{theorem_wmc} to perform weighted model counting.

First, we define the \textdef{product} of two functions.
\begin{definition}
\label{def_add_mult}
  Let $X$ and $Y$ be sets of variables.
  The \textdef{product} of functions $A: 2^X \to \R$ and $B: 2^Y \to \R$ is the function $A \mult B: 2^{X \cup Y} \to \R$ defined for all $\tau \in 2^{X \cup Y}$ by
  $$(A \mult B)(\tau) = A(\tau \cap X) \mult B(\tau \cap Y).$$
\end{definition}

Note that the operator $\mult$ is commutative and associative, and it has the identity element $\mathbf{1}: 2^\sete \to \R$ (that maps $\sete$ to $1$).
If $\varphi: 2^X \to \set{0, 1}$ and $\psi: 2^Y \to \set{0, 1}$ are Boolean formulas, the product $\varphi \mult \psi$ is the Boolean function corresponding to the conjunction $\varphi \wedge \psi$.

Second, we define the \textdef{projection} of a Boolean variable $x$ in a real-valued function $A$, which reduces the number of variables in $A$ by ``summing out'' $x$.
Variable projection in real-valued functions is similar to variable elimination in Bayesian networks \cite{zhang1994simple}.

\begin{definition}
\label{def_add_project}
  Let $X$ be a set of variables and $x \in X$.
  The \textdef{projection} of $A: 2^X \to \R$ \wrt{} $x$ is the function $\exists_x A: 2^{X \setminus \set{x}} \to \R$ defined for all $\tau \in 2^{X \setminus \set{x}}$ by
  $$(\exists_x A)(\tau) = A(\tau) + A(\tau \cup \set{x}).$$
\end{definition}

One can check that projection is commutative, \ie, that $\exists_x \exists_y A = \exists_y \exists_x A$ for all variables $x, y \in X$ and functions $A: 2^X \to \R$.
If $X = \set{x_1, x_2, \ldots, x_n}$, define
$$\exists_X A = \exists_{x_1} \exists_{x_2} \ldots \exists_{x_n} A.$$

We are now ready to use product and projection to do weighted model counting, through the following theorem.
\begin{theorem}
\label{theorem_wmc}
  Let $\varphi: 2^X \to \set{0, 1}$ be a Boolean formula over a set $X$ of variables, and let $W: 2^X \to \R$ be an arbitrary weight function.
  Then
  $$W(\varphi) = (\exists_X (\varphi \mult W))(\sete).$$
\end{theorem}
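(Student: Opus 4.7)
The plan is to reduce the statement to an unfolding of the three definitions (weighted model count, product, and projection), with a small induction to handle iterated projection. Everything is routine once we establish the key auxiliary fact that projecting out all variables of a function $A : 2^X \to \R$ and then evaluating at $\sete$ yields the sum of $A$ over $2^X$.

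First I would record the straightforward computation on the integrand: for any $\tau \in 2^X$, Definition \ref{def_add_mult} gives $(\varphi \mult W)(\tau) = \varphi(\tau \cap X) \mult W(\tau \cap X) = \varphi(\tau) \mult W(\tau)$, so the function $\varphi \mult W : 2^X \to \R$ agrees pointwise with the summand appearing in Definition \ref{def_wmc}.

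Next I would prove the auxiliary lemma, by induction on $n = \size{X}$: for every $A : 2^X \to \R$,
\[
  (\exists_X A)(\sete) = \sum_{\tau \in 2^X} A(\tau).
\]
The base case $n = 0$ is immediate, since $\exists_\sete A = A$ and $2^\sete = \set{\sete}$. For the inductive step, write $X = \set{x_1, \ldots, x_n}$ and set $Y = X \setminus \set{x_n}$. By Definition \ref{def_add_project}, $B = \exists_{x_n} A : 2^Y \to \R$ satisfies $B(\sigma) = A(\sigma) + A(\sigma \cup \set{x_n})$ for each $\sigma \in 2^Y$, so $\sum_{\sigma \in 2^Y} B(\sigma) = \sum_{\tau \in 2^X} A(\tau)$ by splitting the outer sum according to whether $x_n \in \tau$. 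Using the commutativity of projection noted after Definition \ref{def_add_project}, $\exists_X A = \exists_Y (\exists_{x_n} A) = \exists_Y B$, and the inductive hypothesis applied to $B$ finishes the step.

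Finally I would combine the two ingredients: apply the lemma with $A = \varphi \mult W$ to obtain $(\exists_X(\varphi \mult W))(\sete) = \sum_{\tau \in 2^X} (\varphi \mult W)(\tau) = \sum_{\tau \in 2^X} \varphi(\tau) \mult W(\tau) = W(\varphi)$, the last equality being Definition \ref{def_wmc}. I do not expect any real obstacle here; the only mild care point is the inductive step, where one must be explicit that projecting out a single variable converts the target sum over $2^X$ into the sum over $2^Y$, and that iterated projection is well-defined independent of the enumeration of $X$ thanks to commutativity.
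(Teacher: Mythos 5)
Your proposal is correct and follows essentially the same route as the paper: both establish the key fact that $(\exists_X A)(\sete) = \sum_{\tau \in 2^X} A(\tau)$ by peeling off one variable at a time via Definition \ref{def_add_project}, then specialize to $A = \varphi \mult W$ and unfold Definitions \ref{def_add_mult} and \ref{def_wmc}. The only difference is cosmetic --- you package the iterated projection as an explicit induction on $\size{X}$, where the paper writes it as an informal chain of equalities.
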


Theorem \ref{theorem_wmc} suggests that $W(\varphi)$ can be computed by constructing an ADD for $\varphi$ and another for $W$, computing the ADD for their product $\varphi \mult W$, and performing a sequence of projections to obtain the final weighted model count.
Unfortunately, this ``monolithic'' approach is infeasible in most interesting cases: the ADD representation of $\varphi \mult W$ is often too large, even with the best possible diagram variable order.

Instead, we next show a technique for avoiding the construction of an ADD for $\varphi \mult W$ by rearranging the products and projections.

\subsection{Early Projection}

A key technique in symbolic computation is \textdef{early projection}: when performing a product followed by a projection (as in Theorem \ref{theorem_wmc}), it is sometimes possible and advantageous to perform the projection first.
Early projection is possible when one component of the product does not depend on the projected variable.
Early projection has been used in a variety of settings, including database-query optimization \cite{KV00b}, symbolic model checking \cite{BCL91}, and satisfiability solving \cite{PV05}.
The formal statement is as follows.

\begin{theorem}[Early Projection]
\label{theorem_early_project}
  Let $X$ and $Y$ be sets of variables, $A: 2^X \to \R$, and $B: 2^Y \to \R$.
  For all $x \in X \setminus Y$,
  $$\exists_x (A \mult B) = \pars{\exists_x A} \mult B.$$
  As a corollary, for all $X' \subseteq X \setminus Y$,
  $$\exists_{X'} (A \mult B) = \pars{\exists_{X'} A} \mult B.$$
\end{theorem}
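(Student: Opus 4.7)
The plan is to prove the identity pointwise, evaluating both sides at an arbitrary assignment and showing they agree. The key leverage is the hypothesis $x \in X \setminus Y$, which means $B$ does not depend on $x$ and therefore factors unchanged out of the projection sum.

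Concretely, I would fix $\tau \in 2^{(X \cup Y) \setminus \set{x}}$ and first expand $(\exists_x (A \mult B))(\tau)$ using Definition \ref{def_add_project} and then Definition \ref{def_add_mult}, obtaining a sum of two products of the form $A(\cdot) \mult B(\cdot)$. The crucial simplification uses two set identities: $(\tau \cup \set{x}) \cap Y = \tau \cap Y$, since $x \notin Y$, and $(\tau \cup \set{x}) \cap X = (\tau \cap X) \cup \set{x}$, since $x \in X$. These let me factor $B(\tau \cap Y)$ out of the sum and recognize the remaining $A$ terms as exactly $(\exists_x A)(\tau \cap X)$. Unfolding the right-hand side $((\exists_x A) \mult B)(\tau)$ through Definition \ref{def_add_mult} produces this same expression (using $\tau \cap (X \setminus \set{x}) = \tau \cap X$ because $x \notin \tau$), matching both sides.

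For the corollary, I would induct on $|X'|$. The base case $X' = \sete$ is immediate, as projecting no variables leaves the function unchanged. For the inductive step, pick any $x \in X'$, apply the single-variable statement to replace $\exists_x (A \mult B)$ with $(\exists_x A) \mult B$, and then invoke the inductive hypothesis on $X' \setminus \set{x}$ with $\exists_x A : 2^{X \setminus \set{x}} \to \R$ playing the role of $A$. The disjointness hypothesis $X' \setminus \set{x} \subseteq (X \setminus \set{x}) \setminus Y$ is preserved at every step because $X' \subseteq X \setminus Y$, and commutativity of projection justifies the freedom in choosing $x$.

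The main obstacle is not conceptual — the proof is essentially a careful unfolding of definitions — but rather notational: keeping the various set intersections $\tau \cap X$, $\tau \cap Y$, $\tau \cap (X \setminus \set{x})$, and $(\tau \cup \set{x}) \cap X$ straight, and deploying the hypothesis $x \notin Y$ at exactly the moment needed to collapse the two evaluations of $B$ into one. Once that collapse is done, linearity of the sum over $A$ delivers the projection on $A$ alone, and the rest is bookkeeping.
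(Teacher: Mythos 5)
Your proposal matches the paper's proof essentially step for step: both argue pointwise at an arbitrary $\tau \in 2^{(X \cup Y) \setminus \set{x}}$, unfold Definitions \ref{def_add_project} and \ref{def_add_mult}, use $x \notin Y$ to collapse the two evaluations of $B$ and $x \in X$ to recognize the $A$-terms as $(\exists_x A)(\tau \cap X)$, then factor. Your explicit induction for the corollary is slightly more detailed than the paper, which simply asserts it, but the argument is the same and correct.
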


The use of early projection in Theorem \ref{theorem_wmc} is quite limited when $\varphi$ and $W$ have already been represented as ADDs, since on many benchmarks both $\varphi$ and $W$ depend on most of the variables.
If $\varphi$ is a CNF formula and $W$ is a literal-weight function, however, both $\varphi$ and $W$ can be rewritten as products of smaller functions.
This can significantly increase the applicability of early projection.

Assume that $\varphi : 2^X \to \set{0, 1}$ is a CNF formula, \ie, given as a set of clauses.
For every clause $\gamma \in \varphi$, observe that $\gamma$ is a Boolean formula $\gamma: 2^{X_\gamma} \to \set{0, 1}$ where $X_\gamma \subseteq X$ is the set of variables appearing in $\gamma$.
One can check using Definition \ref{def_add_mult} that $\varphi = \prod_{\gamma \in \varphi} \gamma$.

Similarly, assume that $W: 2^X \to \R$ is a literal-weight function.
For every $x \in X$, define $W_x : 2^{\set{x}} \to \R$ to be the function that maps $\sete$ to $W^-(x)$ and $\set{x}$ to $W^+(x)$.
One can check using Definition \ref{def_add_mult} that $W = \prod_{x \in X} W_x$.

When $\varphi$ is a CNF formula and $W$ is a literal-weight function, we can rewrite Theorem \ref{theorem_wmc} as
\begin{equation}
\label{eqn_computing_lwmc}
  W(\varphi) = \pars{\exists_X \pars{\prod_{\gamma \in \varphi} \gamma \mult \prod_{x \in X} W_x}}(\sete).
\end{equation}

By taking advantage of the associative and commutative properties of multiplication as well as the commutative property of projection, it is possible to rearrange Equation \ref{eqn_computing_lwmc} in order to apply early projection.
We present an algorithm to perform this rearrangement in the following section.




\newcommand{\algoAddmc}{
\begin{algorithm*}
\label{algo_lwmc_cnf}
  \caption{CNF literal-weighted model counting with ADDs}
  \DontPrintSemicolon
  \KwIn{$X$: set of Boolean variables}
  \KwIn{$\varphi$: nonempty CNF formula over $X$}
  \KwIn{$W$: literal-weight function over $X$}
  \KwOut{$W(\varphi)$: weighted model count of $\varphi$ \wrt{} $W$}
    $\pi \gets \getDiagramVarOrder(\varphi)$
      \tcc*{injection $\pi : X \to \N$}
    $\rho \gets \getClusterVarOrder(\varphi)$
      \tcc*{injection $\rho : X \to \N$}
    $m \gets \max_{x \in X} \rho(x)$ \\
    \For{$i = m, m - 1, \ldots, 1$}{
      $\Gamma_i \gets \set{\gamma \in \varphi : \get{clause-rank}(\gamma, \rho) = i}$
        \tcc*{collecting clauses $\gamma$ with rank $i$}
      $\kappa_i \gets \set{\get{clause-ADD}(\gamma, \pi) : \gamma \in \Gamma_i}$
        \tcc*{cluster $\kappa_i$ contains ADDs of clauses $\gamma$ with rank $i$}
      $X_i \gets \vars(\kappa_i) \setminus \bigcup^m_{p = i + 1} \vars(\kappa_p)$
        \tcc*{variables already placed in $X_i$ will not be placed in $X_1, X_2, \ldots, X_{i - 1}$}
    }
    \For{$i = 1, 2, \ldots, m$}{ \label{line_loop2}
      \If{$\kappa_i \ne \sete$}{
        $A_i \gets \prod_{D \in \kappa_i} D$
          \tcc*{product of all ADDs $D$ in cluster $\kappa_i$}
        \For{$x \in X_i$}{ \label{line_loop2_inner}
          $A_i \gets \exists_x \pars{A_i \mult W_x}$
            \tcc*{$W_x : 2^{\set{x}} \to \R$, represented by an ADD}
        }
        \If{$i < m$}{
          $j \gets \chooseCluster(A_i, i)$ \label{line_choose_cluster}
            \tcc*{$i < j \le m$}
          $\kappa_j \gets \kappa_j \cup \set{A_i}$ \label{line_add_to_cluster}
        }
      }
    }
    \Return{$A_m(\sete)$} \label{line_return}
      \tcc*{$A_m : 2^\sete \to \R$}
\end{algorithm*}
}


\algoAddmc

\section{Dynamic Programming for Model Counting}
\label{sec_algorithm}

In this section, we discuss an algorithm for performing literal-weighted model counting of CNF formulas using ADDs through dynamic-programming techniques.

Our algorithm is presented as Algorithm \ref{algo_lwmc_cnf}.
Broadly, our algorithm partitions the clauses of a formula $\varphi$ into clusters.
For each cluster, we construct the ADD corresponding to the conjunction of its clauses.
These ADDs are then incrementally combined via the multiplication operator.
Throughout, each variable of the ADD is projected as early as Theorem \ref{theorem_early_project} allows ($X_i$ is the set of variables that can be projected in each iteration $i$ of the second loop).
At the end of the algorithm, all variables have been projected, and the resulting ADD has a single node representing the weighted model count.
This algorithm can be seen as rearranging the terms of Equation \ref{eqn_computing_lwmc} (according to the clusters) in order to perform the projections indicated by $X_i$ at each step $i$.

The function $\get{clause-ADD}(\gamma, \pi)$ constructs the ADD representing the clause $\gamma$ using the diagram variable order $\pi$.
The remaining functions that appear throughout Algorithm \ref{algo_lwmc_cnf}, namely $\getDiagramVarOrder$, $\getClusterVarOrder$, $\get{clause-rank}$, and $\chooseCluster$, represent heuristics that can be used to tune the specifics of the algorithm.

Before discussing the various heuristics, we assert the correctness of Algorithm \ref{algo_lwmc_cnf} in the following theorem.
\begin{theorem}
\label{theorem_algo_correct}
  Let $X$ be a set of variables, $\varphi$ be a nonempty CNF formula over $X$, and $W$ be a literal-weight function over $X$.
  Assume that $\getDiagramVarOrder$ and $\getClusterVarOrder$ return injections $X \rightarrow \N$.
  Furthermore, assume that all $\get{clause-rank}$ and $\chooseCluster$ calls satisfy the following conditions:
  \begin{enumerate}[ref=\arabic*]
    \item $1 \le \get{clause-rank}(\gamma, \rho) \le m$, \label{cond1}
    \item $i < \chooseCluster(A_i, i) \le m$, and \label{cond2}
    \item $X_s \cap \vars(A_i) = \sete$ for all integers $s$ such that $i < s < \chooseCluster(A_i, i)$. \label{cond3}
  \end{enumerate}
  Then Algorithm \ref{algo_lwmc_cnf} returns $W(\varphi)$.
\end{theorem}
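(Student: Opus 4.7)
The plan is to invoke Theorem~\ref{theorem_wmc} to reduce the claim to showing that Algorithm~\ref{algo_lwmc_cnf} returns $(\exists_X(\varphi \mult W))(\sete)$; since $\varphi = \prod_{\gamma \in \varphi} \gamma$ and $W = \prod_{x \in X} W_x$ by the definitions of $\mult$ and literal weights, this further reduces to verifying that the algorithm correctly evaluates the right-hand side of Equation~\ref{eqn_computing_lwmc}. Correctness will then follow from a loop invariant for the \textbf{for}-loop at line~\ref{line_loop2}, the idea being that the algorithm merely rearranges that expression using associativity and commutativity of $\mult$, commutativity of $\exists$, and (crucially) early projection (Theorem~\ref{theorem_early_project}). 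Before setting up the invariant, I would verify the bookkeeping: Condition~\ref{cond1} makes $\Gamma_1, \ldots, \Gamma_m$ a partition of $\varphi$; a short set-theoretic check on the definition of $X_i$ shows that $X_1, \ldots, X_m$ is a partition of $\bigcup_i \vars(\kappa_i)$, so every factor $W_x$ is multiplied in exactly once; and Condition~\ref{cond2} ensures that the loop terminates and processes each cluster exactly once.

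The core step is the loop invariant. Let $P_i = X_1 \cup \cdots \cup X_{i-1}$ denote the variables already projected before iteration $i$, and let $\Pi_i$ denote the product of all ADDs currently held in clusters $\kappa_i, \ldots, \kappa_m$ at the start of iteration $i$. I would prove by induction on $i$ that
\begin{equation*}
  \Pi_i \;=\; \exists_{P_i}\!\left(\prod_{\gamma \in \varphi} \gamma \;\mult\; \prod_{x \in P_i} W_x\right).
\end{equation*}
The base case ($i=1$) is immediate since $P_1 = \sete$ and $\Pi_1 = \prod_{\gamma \in \varphi} \gamma$. The case $i = m+1$ gives $P_{m+1} = X$, and the only surviving ADD is $A_m : 2^\sete \to \R$, so line~\ref{line_return} returns $A_m(\sete) = (\exists_X(\varphi \mult W))(\sete) = W(\varphi)$ by Theorem~\ref{theorem_wmc}.

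The inductive step is the main obstacle. In iteration $i$, the algorithm replaces the factor $B_i = \prod_{D \in \kappa_i} D$ in $\Pi_i$ by $A_i = \exists_{X_i}(B_i \mult \prod_{x \in X_i} W_x)$ and migrates $A_i$ into a later cluster $\kappa_j$. To lift this local transformation to the invariant at $i+1$, I need to pull $\exists_{X_i}$ past the remaining factor $R_i$ (the product of all ADDs in $\kappa_{i+1}, \ldots, \kappa_m$ at the start of iteration $i$) via Theorem~\ref{theorem_early_project}, which is legal only if no variable of $X_i$ appears in any other active cluster. To guarantee this, I would carry an auxiliary invariant in parallel: at the start of iteration $i$, for every $s$ with $i \leq s \leq m$, $X_s \cap \vars(\kappa_p) = \sete$ for all $p > s$. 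This holds initially by the definition of $X_s$, and Condition~\ref{cond3} is exactly what preserves it when $A_i$ is inserted into $\kappa_j$: projecting $X_i$ first removes $X_i$ from $\vars(A_i)$; Condition~\ref{cond3} rules out the range $i < s < j$; and for $s \geq j$ the insertion is harmless since $j \leq s$. Once the auxiliary invariant justifies the early-projection step, a second application of Theorem~\ref{theorem_early_project} (using $P_i \cap X_i = \sete$ to absorb $\prod_{x \in X_i} W_x$ into the scope of $\exists_{P_i}$) together with commutativity of $\exists$ combines $\exists_{P_i}$ and $\exists_{X_i}$ into $\exists_{P_{i+1}}$, completing the induction.
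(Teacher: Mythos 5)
Your proposal is correct and follows essentially the same route as the paper's proof: an auxiliary disjointness invariant (the paper's Lemma~\ref{lemma_vars_disjoint}, preserved exactly by Condition~\ref{cond3}) combined with a main loop invariant proved by induction on the iterations of the loop at line~\ref{line_loop2}, using Theorem~\ref{theorem_early_project} to move $\exists_{X_i}$ past the other active clusters and Theorem~\ref{theorem_wmc} to tie off the ends. The only difference is cosmetic: you state the main invariant ``forwards'' (the current product equals the original expression with the already-handled variables projected out), whereas the paper's Lemma~\ref{lemma_remaining_clusters} states it ``backwards'' ($W(\varphi)$ equals the projection of what remains), but these are equivalent formulations of the same induction.
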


By Condition \ref{cond1}, we know the set $\set{\Gamma_1, \Gamma_2, \ldots, \Gamma_m}$ forms a partition of the clauses in $\varphi$.
Condition \ref{cond2} ensures that lines \ref{line_choose_cluster}-\ref{line_add_to_cluster} place $A_i$ in a cluster that has not yet been processed.
Also on lines \ref{line_choose_cluster}-\ref{line_add_to_cluster}, Condition \ref{cond3} prevents $A_i$ from skipping a cluster $\kappa_s$ which shares some variable $y$ with $A_i$, as $y$ will be projected at step $s$.
These three invariants are sufficient to prove that Algorithm \ref{algo_lwmc_cnf} indeed computes the weighted model count of $\varphi$ \wrt{} $W$.
All heuristics we describe in this paper satisfy the conditions of Theorem \ref{theorem_algo_correct}.

In the remainder of this section, we discuss a variety of existing heuristics that can be used as a part of Algorithm \ref{algo_lwmc_cnf} to implement $\getDiagramVarOrder$, $\getClusterVarOrder$, $\get{clause-rank}$, and $\chooseCluster$.

\subsection{Heuristics for $\getDiagramVarOrder$ and $\getClusterVarOrder$}

The heuristic chosen for $\getDiagramVarOrder$ indicates the variable order that is used as the diagram variable order in every ADD constructed by Algorithm \ref{algo_lwmc_cnf}.
The heuristic chosen for $\getClusterVarOrder$ indicates the variable order which, when combined with the heuristic for $\get{clause-rank}$, is used to order the clauses of $\varphi$.
(\heuristic{BE} orders clauses by the smallest variable that appears in each clause, while \heuristic{BM} orders clauses by the largest variable.)
In this work, we consider seven possible heuristics for each variable order: \heuristic{Random}, \heuristic{MCS}, \heuristic{LexP}, \heuristic{LexM}, \heuristic{InvMCS}, \heuristic{InvLexP}, and \heuristic{InvLexM}.

One simple heuristic for $\getDiagramVarOrder$ and $\getClusterVarOrder$ is to randomly order the variables, \ie, for a formula over some set $X$ of variables, sample an injection $X \rightarrow \set{1, 2, \ldots, |X|}$ uniformly at random.
We call this the \textbf{Random} heuristic.
\textbf{Random} is a baseline for comparison of the other variable order heuristics.

For the remaining heuristics, we must define the \textdef{Gaifman graph} $G_\varphi$ of a formula $\varphi$.
The Gaifman graph of $\varphi$ has a vertex for every variable in $\varphi$.
Two vertices are connected by an edge if and only if the corresponding variables appear in the same clause of $\varphi$.

One such heuristic is called \textdef{Maximum-Cardinality Search} \cite{tarjan1984simple}.
At each step of the heuristic, the next variable chosen is the variable adjacent in $G_\varphi$ to the greatest number of previously chosen variables (breaking ties arbitrarily).
We call this the \heuristic{MCS} heuristic for variable order.

Another such heuristic is called \textdef{Lexicographic search for perfect orders} \cite{koster2001treewidth}.
Each vertex of $G_\varphi$ is assigned an initially-empty set of vertices (called the \textdef{label}).
At each step of the heuristic, the next variable chosen is the variable $x$ whose label is lexicographically smallest among the unchosen variables (breaking ties arbitrarily).
Then $x$ is added to the label of its neighbors in $G_\varphi$.
We call this the \heuristic{LexP} heuristic for variable order.

A similar heuristic is called \textdef{Lexicographic search for minimal orders} \cite{koster2001treewidth}.
As before, each vertex of $G_\varphi$ is assigned an initially-empty label.
At each step of the heuristic, the next variable chosen is again the variable $x$ whose label is lexicographically smallest (breaking ties arbitrarily).
In this case, $x$ is added to the label of every variable $y$ where there is a path $x, z_1, z_2, \ldots, z_k, y$ in $G_\varphi$ such that every $z_i$ is unchosen and the label of $z_i$ is lexicographically smaller than the label of $y$.
We call this the \heuristic{LexM} heuristic for variable order.

Additionally, the variable orders produced by each of the heuristics \heuristic{MCS}, \heuristic{LexP}, and \heuristic{LexM} can be inverted.
We call these new heuristics \heuristic{InvMCS}, \heuristic{InvLexP}, and \heuristic{InvLexM}.

\subsection{Heuristics for $\get{clause-rank}$}

The heuristic chosen for $\get{clause-rank}$ indicates the strategy used for clustering the clauses of $\varphi$.
In this work, we consider three possible heuristics to be chosen for $\get{clause-rank}$ that satisfy the conditions of Theorem \ref{theorem_algo_correct}: \heuristic{Mono}, \heuristic{BE}, and \heuristic{BM}.

One simple case is when the rank of each clause is constant, \eg, when $\get{clause-rank}(\gamma, \rho) = m$ for all $\gamma \in \varphi$.
In this case, all clauses of $\varphi$ are placed in $\Gamma_m$, so Algorithm \ref{algo_lwmc_cnf} combines all clauses of $\varphi$ into a single ADD before performing projections.
This corresponds to the monolithic approach we mentioned earlier.
We thus call this the \heuristic{Mono} heuristic for $\get{clause-rank}$.
Notice that the performance of Algorithm \ref{algo_lwmc_cnf} with \heuristic{Mono} does not depend on the heuristic for $\getClusterVarOrder$ or $\chooseCluster$.
This heuristic has previously been applied to ADDs in the context of knowledge compilation \cite{fargier2014knowledge}.


Another, more complex heuristic assigns the rank of each clause to be the smallest $\rho$-rank of the variables that appear in the clause.
That is, $\get{clause-rank}(\gamma, \rho) = \min_{x \in \vars(\gamma)} \rho(x)$.
This heuristic corresponds to \textdef{bucket elimination} \cite{dechter99}, so we call this the \heuristic{BE} heuristic.
In this case, notice that every clause containing $x \in X$ can only appear in $\Gamma_i$ with $i \le \rho(x)$.
It follows that $x$ has always been projected from all clauses by the end of iteration $\rho(x)$ in the second loop of Algorithm \ref{algo_lwmc_cnf} using \heuristic{BE}.

A related heuristic assigns the rank of each clause to be the largest $\rho$-rank of the variables that appear in the clause.
That is, $\get{clause-rank}(\gamma, \rho) = \max_{x \in \vars(\gamma)} \rho(x)$.
This heuristic corresponds to \textdef{Bouquet's Method} \cite{bouquet99}, so we call this the \heuristic{BM} heuristic.
Unlike the \heuristic{BE} case, we can make no guarantee about when each variable is projected in Algorithm \ref{algo_lwmc_cnf} using \heuristic{BM}.

\subsection{Heuristics for $\chooseCluster$}

The heuristic chosen for $\chooseCluster$ indicates the strategy for combining the ADDs produced from each cluster.
In this work, we consider two possible heuristics to use for $\chooseCluster$ that satisfy the conditions of Theorem \ref{theorem_algo_correct}: \heuristic{List} and \heuristic{Tree}.

One heuristic is when $\chooseCluster$ selects to place $A_i$ in the closest cluster that satisfies the conditions of Theorem \ref{theorem_algo_correct}, namely the next cluster to be processed.
That is, $\chooseCluster(A_i, i) = i + 1$.
Under this heuristic, Algorithm \ref{algo_lwmc_cnf} equivalently builds an ADD for each cluster and then combines the ADDs in a one-by-one, in-order fashion, projecting variables as early as possible.
In every iteration, there is a single intermediate ADD representing the combination of previous clusters.
We call this the \heuristic{List} heuristic.

Another heuristic is when $\chooseCluster$ selects to place $A_i$ in the furthest cluster that satisfies the conditions of Theorem \ref{theorem_algo_correct}.
That is, $\chooseCluster(A_i, i)$ returns the smallest $j > i$ such that $X_j \cap \vars(A_i) \ne \sete$ (or returns $m$, if $\vars(A_i) = \sete)$.
In every iteration, there may be multiple intermediate ADDs representing the combinations of previous clusters.
We call this the \heuristic{Tree} heuristic.

Notice that the computational structure of Algorithm \ref{algo_lwmc_cnf} can be represented by a tree of clusters, where cluster $\kappa_i$ is a child of cluster $\kappa_j$ whenever the ADD produced from $\kappa_i$ is placed in $\kappa_j$ (lines \ref{line_choose_cluster}-\ref{line_add_to_cluster}).
These trees are always left-deep under the \heuristic{List} heuristic, but they can be more complex under the \heuristic{Tree} heuristic.

We can combine $\get{clause-rank}$ heuristics and (if applicable) $\chooseCluster$ heuristics to form \textdef{clustering heuristics}: $\Mono$, $\Be-\List$, $\Be-\Tree$, $\Bm-\List$, and $\Bm-\Tree$.



\section{Empirical Evaluation}
\label{sec_experiments}


\newcommand{\figwidth}{.95\columnwidth} 


We implement Algorithm \ref{algo_lwmc_cnf} using the ADD package \cudd{} to produce \addmc{}, a new weighted model counter.
\addmc{} supports all heuristics described in Section \ref{sec_algorithm}.
The \addmc{} source code and experimental data can be obtained from a public repository (\url{https://github.com/vardigroup/ADDMC}).

We aim to: (1) find good heuristic configurations for our tool \addmc, and (2) compare \addmc{} against four state-of-the-art weighted model counters: \cachet{} \cite{SBBK04}, \ctd{} \cite{darwiche2004new}, \df{} \cite{lagniez2017improved}, and \minictd{} \cite{OD15}.
To accomplish this, we use a set of \benchmarkCountAltogether{} CNF literal-weighted model counting benchmarks, which were gathered from two sources.

First, the \classBayes{} class%
\footnote{\urlBenchmarksBayes}
contains \benchmarkCountBayes{} benchmarks.
The application domain is Bayesian inference \cite{sang2005solving}.
The accompanied literal weights are in the interval $[0, 1]$.

Second, the \classPseudoweighted{} class%
\footnote{\urlBenchmarksPseudoweighted}
contains \benchmarkCountPseudoweighted{} benchmarks.
This benchmark class is subdivided into eight families: \famBmc, \famCircuit, \famConfig, \famHandmade, \famPlanning, \famQif, \famRandom, and \famSchedule{} \cite{clarke2001bounded,sinz2003formal,palacios2009compiling,klebanov2013sat}.
All of these benchmarks are originally unweighted.
As we focus in this work on weighted model counting, we generate weights by, for each variable $x$, randomly assigning: either weights $W^+(x) = 0.5$ and $W^-(x) = 1.5$, or $W^+(x) = 1.5$ and $W^-(x) = 0.5$.%
\footnote{
  For each variable $x$, \cachet{} requires $W^+(x) + W^-(x) = 1$ unless $W^+(x) = W^-(x) = 1$.
  So we use weights 0.25 and 0.75 for \cachet{} and multiply the model count produced by \cachet{} on a formula $\varphi$ by $2^{\size{\vars(\varphi)}}$ as a postprocessing step.
}
Generating weights in this particular fashion results in a reasonably low amount of floating-point underflow and overflow for all model counters.

\subsection{Experiment 1: Comparing \addmc{} Heuristics}
\label{sec_experiments_heuristics}

\addmc{} heuristic configurations are constructed from five clustering heuristics (\Mono, \Be-\List, \Be-\Tree, \Bm-\List, and \Bm-\Tree) together with seven variable order heuristics (\Random, \Mcs, \Invmcs, \Lexp, \Invlexp, \Lexm, and \Invlexm).
Using one variable order heuristic for the cluster variable order and another for the diagram variable order gives us \heuristicConfigCount{} configurations in total.
We compare these configurations to find the best combination of heuristics.

On a Linux cluster with Xeon E5-2650v2 CPUs (2.60-GHz), we run each combination of heuristics on each benchmark using a single core, 24 GB of memory, and a \timeoutAllHeuristics-second timeout.

\subsubsection{Performance Analysis}

\begin{table}
  \caption{The numbers of benchmarks solved (of \benchmarkCountAltogether) in \timeoutAllHeuristics{} seconds by the best, second-best, median, and worst \addmc{} heuristic configurations.}
  \centering
  \begin{tabular}{|l|l|l|r|l|} \hline
    Clustering & Clus. var. & Diag. var. & Solved & Name \\ \hline
    \Bm-\Tree & \Lexp & \Mcs & 1202 & \besto{} \\ \hline
    \Be-\Tree & \Invlexp & \Mcs & 1200 & \bestt{} \\ \hline
    \Be-\List & \Lexp & \Lexp & 504 & \textit{Median} \\ \hline
    \Be-\List & \Random & \Random & 53 & \textit{Worst} \\ \hline
  \end{tabular}
\label{table_heuristics}
\end{table}

Table \ref{table_heuristics} reports the numbers of benchmarks solved by four \addmc{} heuristic configurations: best, second-best, median, and worst (of \heuristicConfigCount{} configurations in total).
Bouquet's Method (\Bm) and bucket elimination (\Be) have similar-performing top configurations: \besto{} and \bestt{}.
This shows that Bouquet's Method is competitive with bucket elimination.

\begin{figure}
  \includegraphics[width=\figwidth]{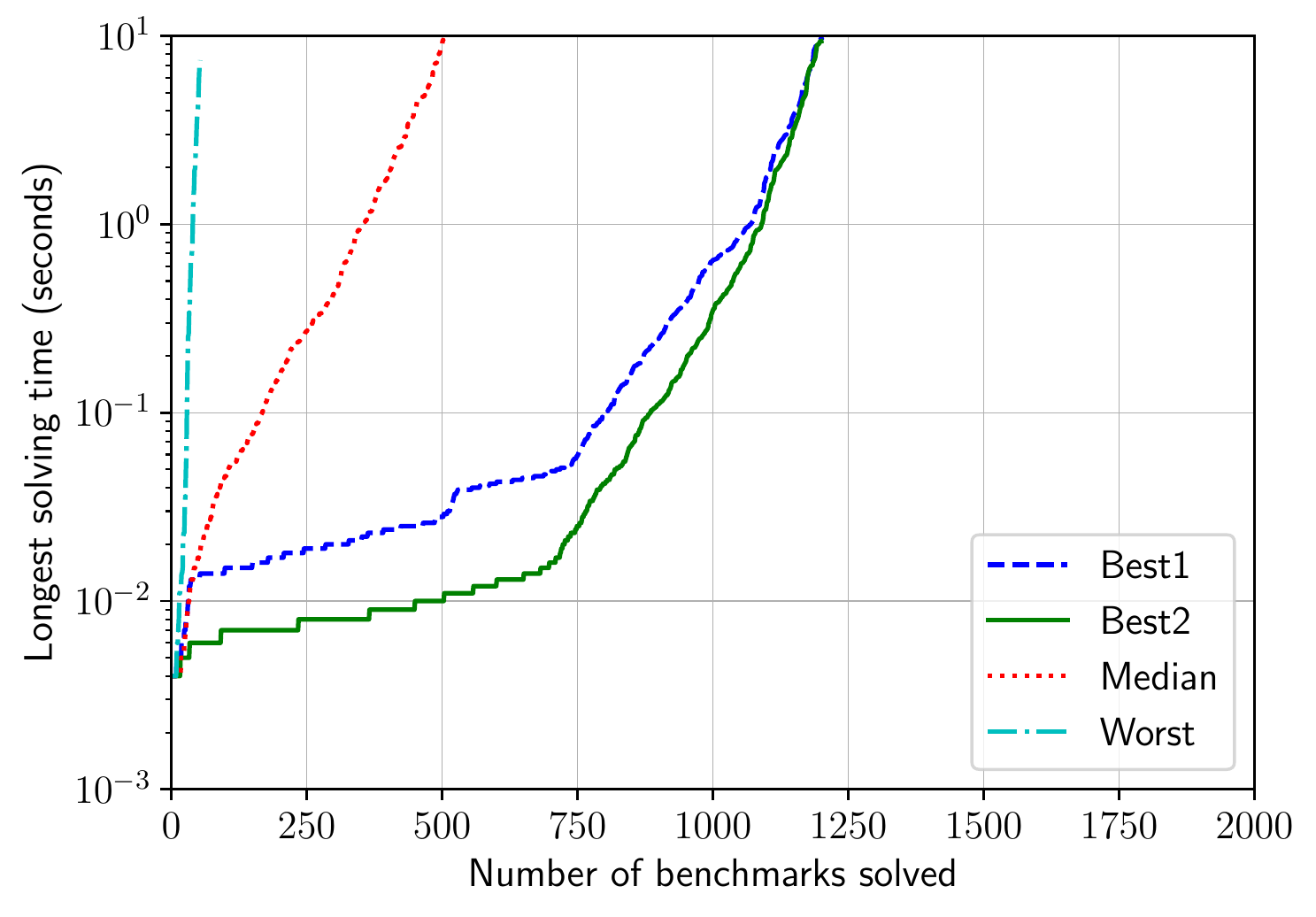}
  \caption{A cactus plot of the numbers of benchmarks solved by the best, second-best, median, and worst \addmc{} heuristic configurations.}
\label{fig_heuristics}
\end{figure}

See Figure \ref{fig_heuristics} for a more detailed analysis of the runtime of these four heuristic configurations.
Evidently, some configurations perform quite well while others perform quite poorly.
The wide range of performance indicates that the choice of heuristics is essential to the competitiveness of \addmc{}.

We choose \besto{} (\Bm-\Tree{} clustering with \Lexp{} cluster variable order and \Mcs{} diagram variable order), which was the heuristic configuration able to solve the most benchmarks within \timeoutAllHeuristics{} seconds, as the representative \addmc{} configuration for Experiment 2.

\subsection{Experiment 2: Comparing Weighted Model Counters}
\label{sec_experiments_counters}

In the previous experiment, the \addmc{} heuristic configuration able to solve the most benchmarks is \besto{} (\Bm-\Tree{} clustering with \Lexp{} cluster variable order and \Mcs{} diagram variable order).
Using this configuration, we now compare \addmc{} to four state-of-the-art weighted model counters: \cachet, \ctd%
\footnote{
  \ctd{} does not natively support weighted model counting.
  To compare \ctd{} to weighted model counters, we use \ctd{} to compile CNF into d-DNNF then use \reasoner{} (\url{http://www.cril.univ-artois.fr/kc/d-DNNF-reasoner.html}) to compute the weighted model count.
  On average, \ctd's compilation time is 81.65\% of the total time.
}%
, \df, and \minictd.
(We note that \cachet{} uses \texttt{long double} precision, whereas all other model counters use \texttt{double} precision.)

On a Linux cluster with Xeon E5-2650v2 CPUs (2.60-GHz), we run each counter on each benchmark using a single core, 24 GB of memory and a \timeoutCounters-second timeout.

\subsubsection{Correctness Analysis}

To compare answers computed by different weighted model counters (in the presence of imprecision from floating-point arithmetic), we consider non-negative real numbers $a \le b$ equal when:
$b - a \le 10^{-3}$ if $a = 0$ or $b \le 1$, and $b / a \le 1 + 10^{-3}$ otherwise.

Even with this equality tolerance, weighted model counters still sometimes produce different answers for the same benchmark due to floating-point effects.
In particular, of 1008 benchmarks that are solved by all five model counters, \addmc{} produces 7 model counts that differ from the output of all four other tools.
For \cachet, \ctd, \df, and \minictd, the numbers are respectively 55, 0, 42, and 0. 
To improve \addmc's precision, we plan as future work to integrate a new decision diagram package, \sylvan{} \cite{van2015sylvan}, into \addmc{}.
\sylvan{} can interface with the GNU Multiple Precision library to support ADDs with higher-precision numbers.

\subsubsection{Performance Analysis}

\begin{table}
  \caption{The numbers of benchmarks solved (of \benchmarkCountAltogether) in \timeoutCounters{} seconds --- uniquely (\ie, benchmarks solved by no other solver), fastest, and in total --- by five weighted model counters and two virtual best solvers (\vbs{} and \vbss{}).}
  \centering
  \begin{tabular}{|l|r|r|r|} \hline
    \multirow{2}{*}{Solvers} & \multicolumn{3}{c|}{Benchmarks solved} \\ \cline{2-4}
    & Unique & Fastest & Total \\ \hline
    \vbs{} (with \addmc) & -- & -- & 1771 \\
    \vbss{} (without \addmc) & -- & -- & 1647 \\ \hline
    \df & 12 & 283 & 1587 \\
    \ctd & 0 & 13 & 1417 \\
    \minictd & 8 & 61 & 1407 \\
    \addmc{} & 124 & 763 & 1404 \\
    \cachet & 14 & 651 & 1383 \\ \hline
  \end{tabular}
\label{table_counters_altogether}
\end{table}

Table \ref{table_counters_altogether} summarizes the performance of five weighted model counters (\cachet, \addmc, \minictd, \ctd, and \df) as well as two \textdef{virtual best solvers (VBS)}.
For each benchmark, the solving time of \vbs{} is the shortest solving time among all five actual solvers.
Similarly, the time of \vbss{} is the shortest time among four actual solvers, excluding \addmc.
Note that \addmc{} uniquely solves \benchmarkCountUnique{} benchmarks (that are solved by no other tool).
Additionally, \addmc{} is the fastest solver on \benchmarkCountFastestNonUnique{} other benchmarks.
So \addmc{} improves the solving time of \vbs{} on \benchmarkCountFastest{} benchmarks in total.


\begin{figure}
  \includegraphics[width=\figwidth]{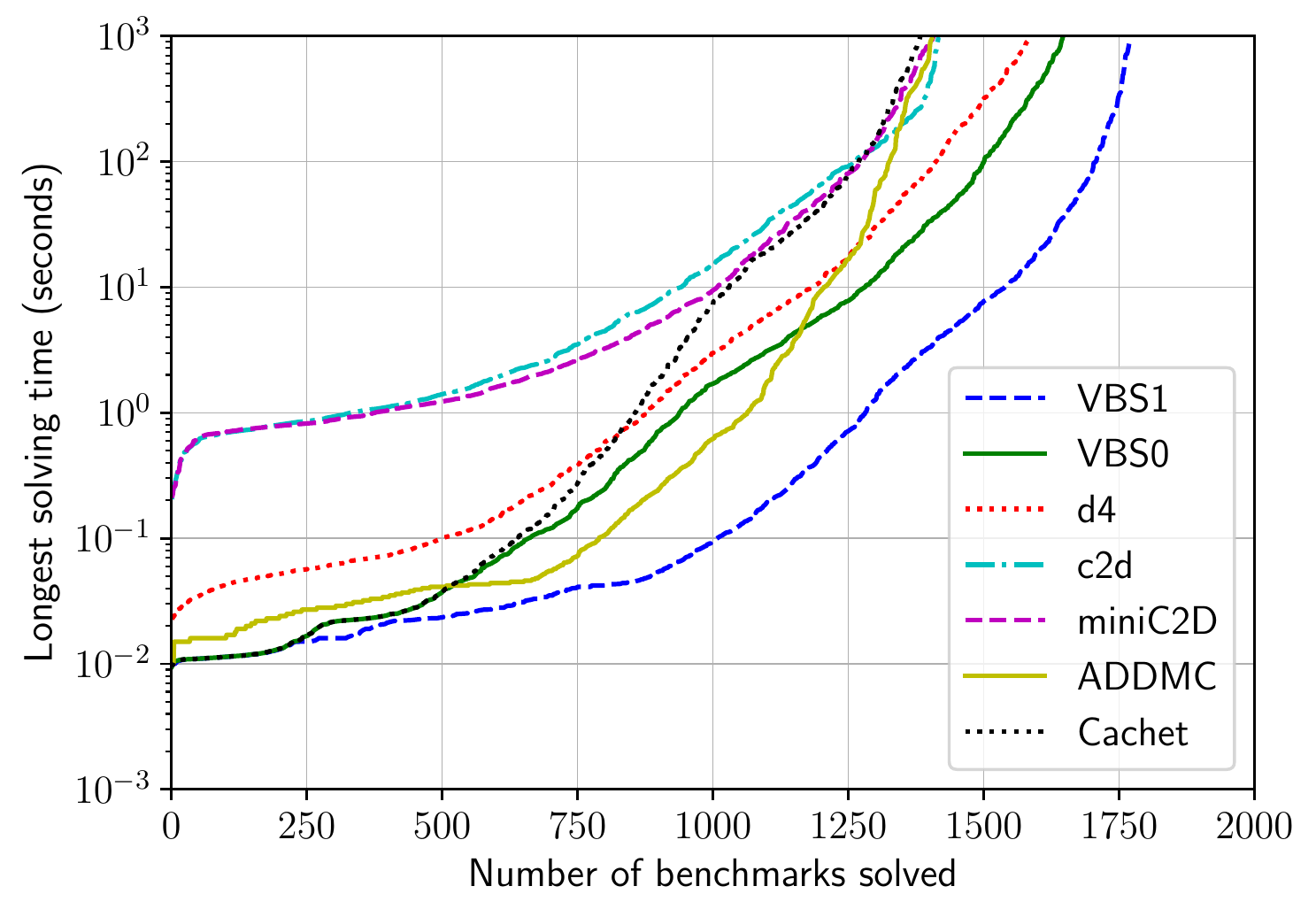}
  \caption{A cactus plot of the numbers of benchmarks solved by five weighted model counters and two virtual best solvers (\vbs{},  with \addmc, and \vbss{}, without \addmc).}
\label{fig_counters}
\end{figure}

See Figure \ref{fig_counters} for a more detailed analysis of the runtime of all solvers.
Evidently, \vbs{} (with \addmc) performs significantly better than \vbss{} (without \addmc).
We conclude that \addmc{} is a useful addition to the portfolio of weighted model counters.

\subsubsection{Predicting \addmc{} Performance}

Generally, \addmc{} can solve a benchmark quickly if all intermediate ADDs constructed during the model counting process are small.
An ADD is small when it achieves high compression under a good diagram variable order; predicting this a priori is difficult and is an area of active research.
However, an ADD also tends to be small if it has few variables, which occurs when an \addmc{} heuristic configuration results in many opportunities for early projection. Moreover, the number of variables that occur in each ADD produced by Algorithm \ref{algo_lwmc_cnf} can be computed much faster than computing the full model count (since we do not need to actually construct the ADDs).

Formally, fix an \addmc{} heuristic configuration.
For a given benchmark, define the \textdef{maximum ADD variable count (MAVC)} to be the largest number of variables across all ADDs that would be constructed when running Algorithm \ref{algo_lwmc_cnf}.
Using the heuristic configuration of Experiment 2 (\besto), we were able to compute the MAVCs of \benchmarkCountMavc{} benchmarks (of \benchmarkCountAltogether{} in total).
We were unable to compute the MAVCs of the remaining 8 benchmarks within \timeoutMavc{} seconds due to the large number of variables and clauses; these benchmarks were also not solved by \addmc{}.

\begin{figure}
  \includegraphics[width=\figwidth]{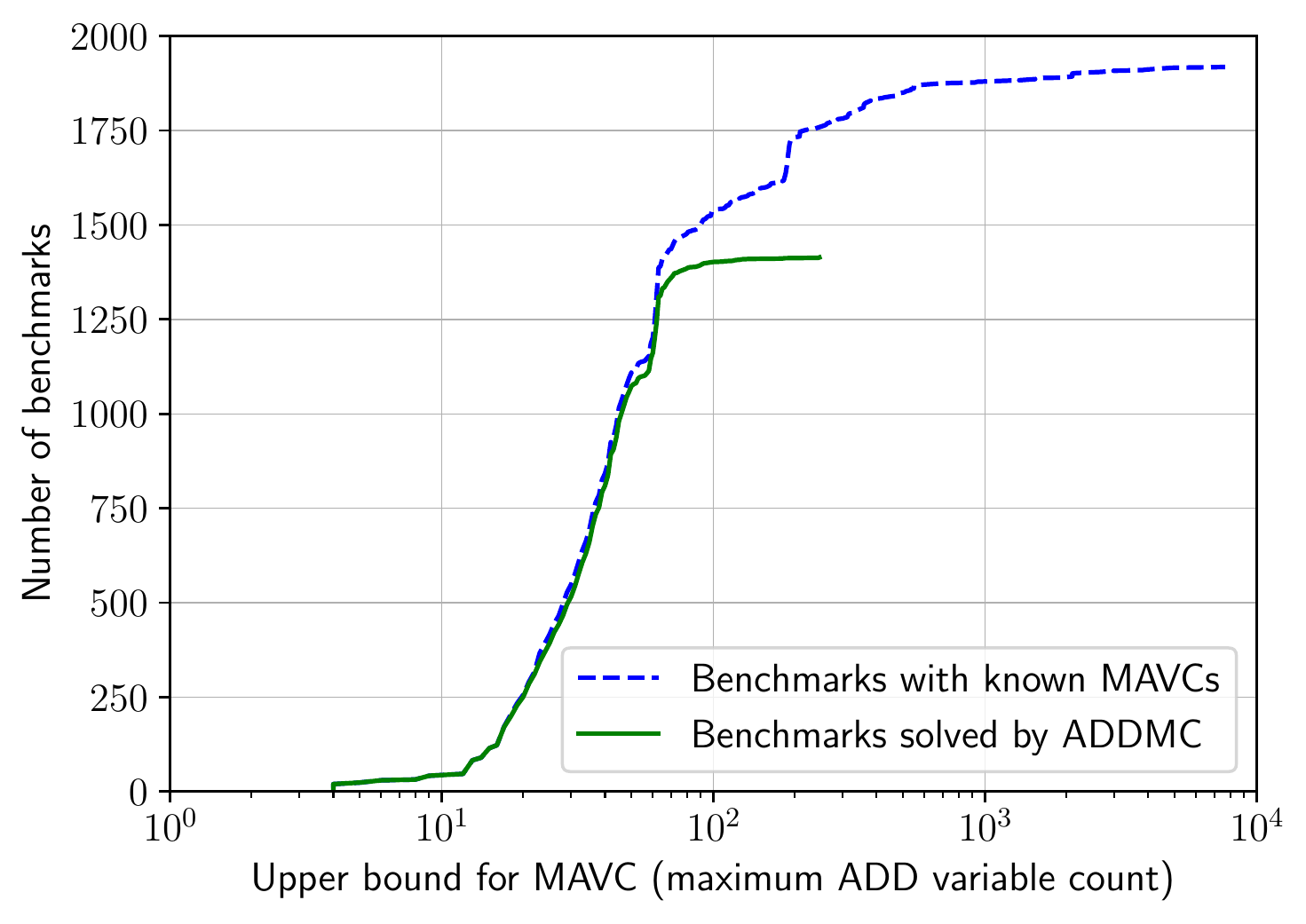}
  \caption{
    A cactus plot of the number of benchmarks, in total and solved by \addmc{}, for various upper bounds on the MAVC.
    The MAVCs of the \benchmarkCountSolved{} benchmarks solved by \addmc{} within \timeoutCounters{} seconds range from \minMavc{} to \maxSolvedMavc{}. 
  }
\label{fig_mavc_benchmarks}
\end{figure}

Figure \ref{fig_mavc_benchmarks} shows the number of benchmarks solved by \addmc{} in Experiment 2 for various upper bounds on the MAVC.
Generally, \addmc{} performed well on benchmarks with low MAVCs.
In particular, \addmc{} solved most benchmarks (1345 of 1425) with MAVCs less than 70 but solved solved few benchmarks (12 of 379) with MAVCs greater than 100.

\begin{figure}
  \includegraphics[width=\figwidth]{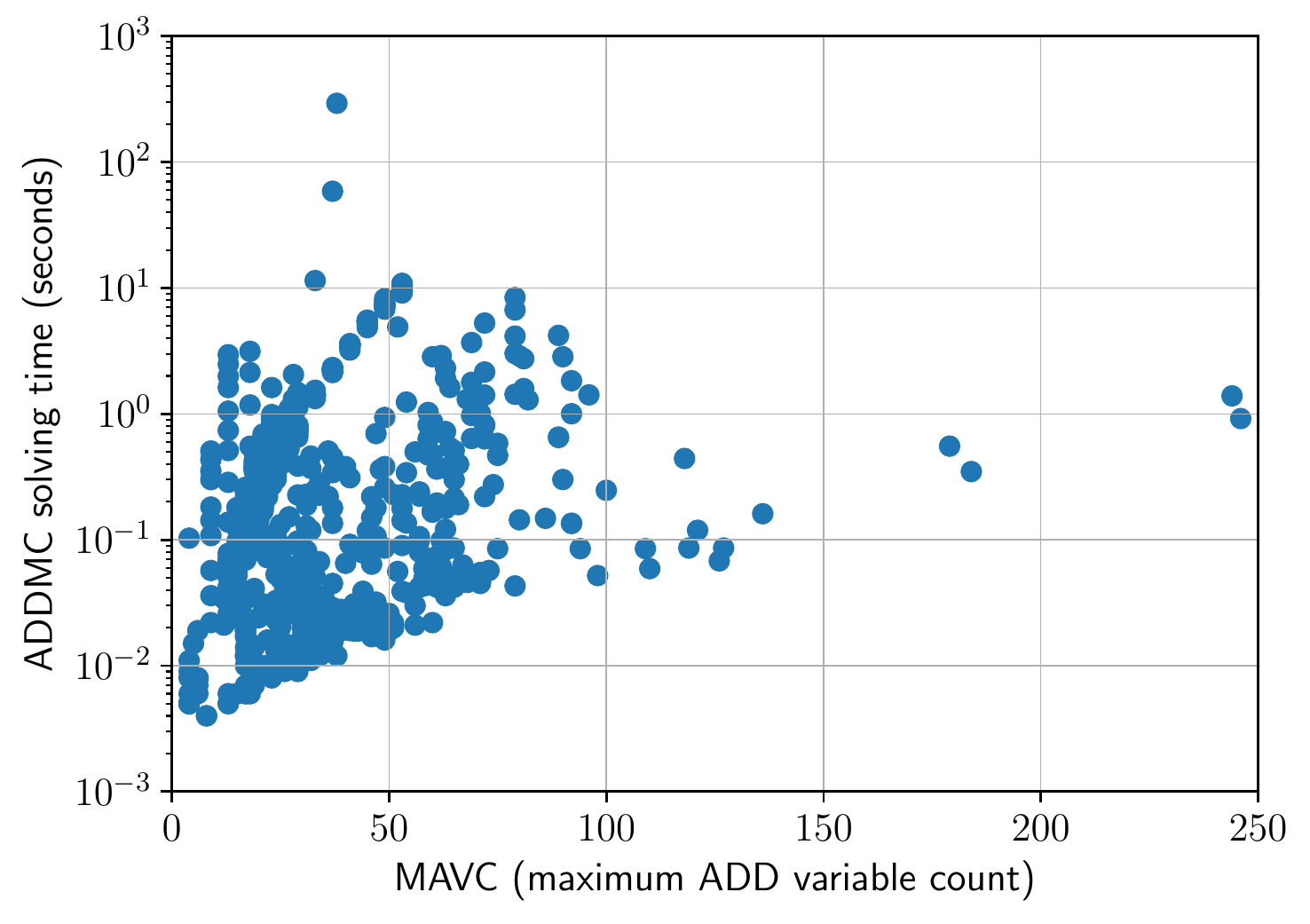}
  \caption{A scatter plot of the solving time of \addmc{} against the MAVC for each of the \benchmarkCountSolved{} benchmarks solved by \addmc{} within \timeoutCounters{} seconds.}
\label{fig_mavc_times}
\end{figure}

Figure \ref{fig_mavc_times} shows the runtime of \addmc{} on the \benchmarkCountSolved{} benchmarks \addmc{} was able to solve in Experiment 2.
In general, \addmc{} was slower on benchmarks with higher MAVCs.

From these two observations, we conclude that the MAVC of a benchmark (under a particular heuristic configuration) is a good predictor of \addmc{} performance.



\section{Discussion}
\label{sec_discussion}

In this work, we developed a dynamic-programming framework for weighted model counting that captures both bucket elimination and Bouquet's Method.
We implemented this algorithm in \addmc, a new weighted model counter.
We used \addmc{} to compare bucket elimination and Bouquet's Method across a variety of variable order heuristics on \benchmarkCountAltogether{} standard model counting benchmarks and concluded that Bouquet's Method is competitive with bucket elimination.

Moreover, we demonstrated that \addmc{} is competitive with existing state-of-the-art weighted model counters on these 1914 benchmarks.
In particular, adding \addmc{} allows the virtual best solver to solve 124 more benchmarks.
Thus \addmc{} is valuable as part of a portfolio of solvers, and ADD-based approaches to model counting in general are promising and deserve further study.
One direction for future work is to investigate how benchmark properties (\eg, treewidth) correlate with the performance of ADD-based approaches to model counting.
Predicting the performance of tools on CNF benchmarks is an active area of research in the SAT solving community \cite{xu2008satzilla}.

Bucket elimination has been well-studied theoretically, with close connections to treewidth and tree decompositions \cite{dechter99,CD07}. 
On the other hand, Bouquet's Method is much less well-known.
Another direction for future work is to develop a theoretical framework to explain the relative performance between bucket elimination and Bouquet's Method.

In this work, we focused on ADDs implemented in the ADD package \cudd.
There are other ADD packages that may be fruitful to explore in the future.
For example, \tool{Sylvan} \cite{van2015sylvan} supports multicore operations on ADDs, which would allow us to investigate the impact of parallelism on our techniques.
Moreover, \tool{Sylvan} supports arbitrary-precision arithmetic.

Several other compact representations have been used in dynamic-programming frameworks for related problems.
For example, AND/OR Multi-Valued Decision Diagrams \cite{mateescu2008and}, Probabilistic Sentential Decision Diagrams \cite{shen2016tractable}, and Probabilistic Decision Graphs \cite{jaeger2004probabilistic} have all been used for Bayesian inference.
Moreover, weighted decision diagrams have been used for optimization \cite{hooker2013decision}, and Affine Algebraic Decision Diagrams have been used for planning \cite{sanner2005affine}.
It would be interesting to see if these compact representations also improve dynamic-programming frameworks for model counting.



\section*{Acknowledgments}
The authors would like to thank Dror Fried, Aditya A. Shrotri, and Lucas M. Tabajara for useful comments.
This work was supported in part by the NSF (grants CNS-1338099, 
IIS-1527668, CCF-1704883, IIS-1830549, 
and DMS-1547433), by the Ken Kennedy Institute Computer Science \& Engineering Enhancement Fellowship (funded by the Rice Oil \& Gas HPC Conference), by the Ken Kennedy Institute for Information Technology 2017/2018 Cray Graduate Fellowship, 
and by Rice University.


\bibliographystyle{aaai.bst}
\bibliography{ADDMC.bib}


\clearpage

\appendix


\section{ADD Graphical Example}

Figure \ref{fig_add} is a graphical example of an ADD.

\begin{figure}
  \centering
  \includegraphics[
    trim={2in .5in 0in 1in} 
  ]
  {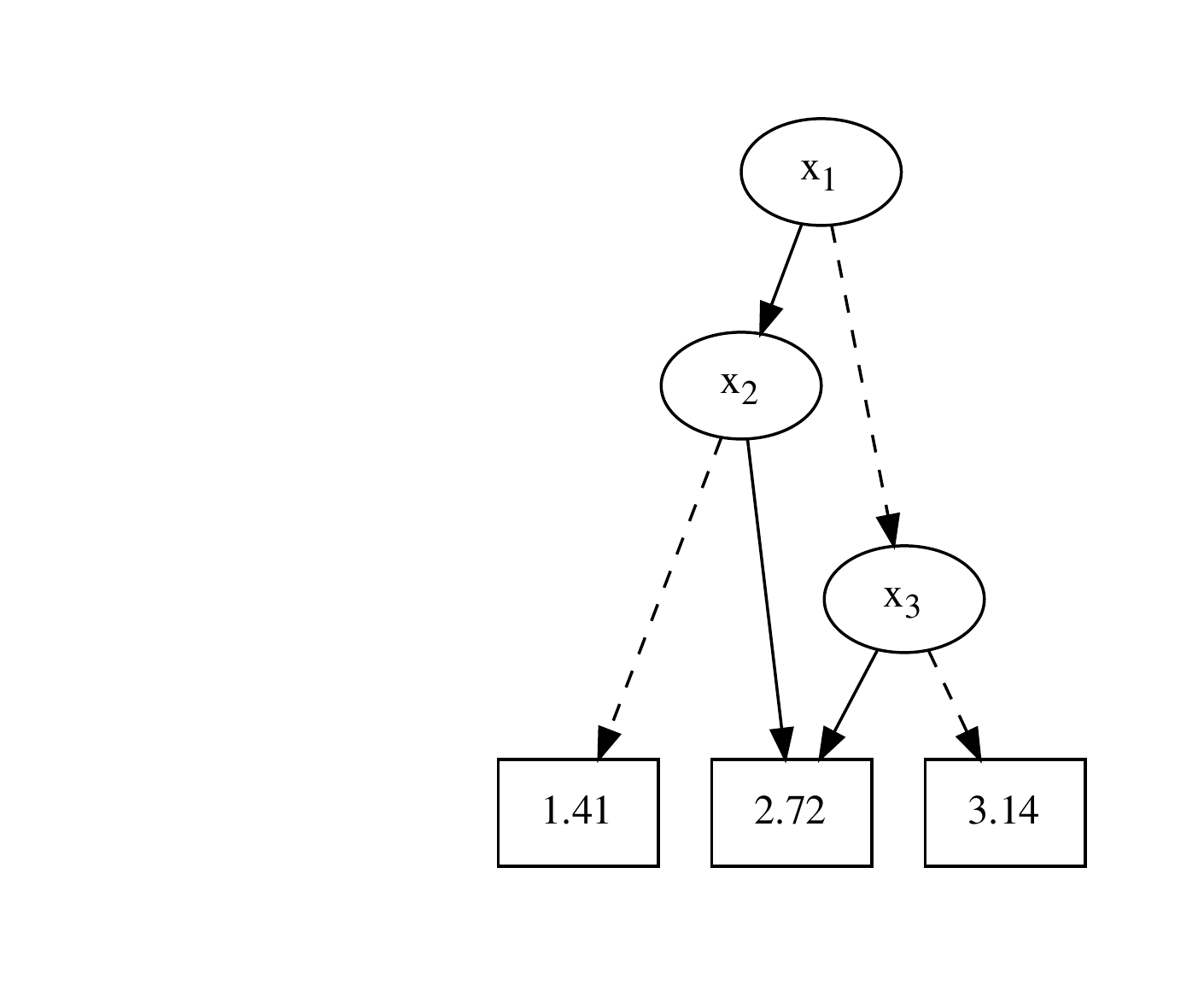}
  \caption{
    The graph $G$ of an ADD with variable set $X = \set{x_1, x_2, x_3}$, carrier set $S = \R$, and diagram variable order $\pi(x_i) = i$ for $i = 1, 2, 3$.
    If a directed edge from an oval node is solid (respectively dashed), then corresponding Boolean variable is assigned 1 (respectively 0).
  }
\label{fig_add}
\end{figure}

\section{Proofs}

\subsection{Proof of Theorem \ref{theorem_wmc}}

\begin{proof}
  Assume the variables in $X$ are $x_1, x_2, \ldots, x_n$.
  Now, for an arbitrary function $A : 2^X \to \R$, we have:
    \begin{align*}
      \sum_{\tau \in 2^X} A(\tau)
      & = \sum_{\tau \in 2^{X \setminus \set{x_n}}} \pars{A(\tau) + A(\tau \cup \set{x_n})}
        \tag{regrouping terms} \\
      & = \sum_{\tau \in 2^{X \setminus \set{x_n}}} ((\exists_{x_n} A)(\tau)) \tag{Definition \ref{def_add_project}} \\
    \end{align*}
  Similarly, projecting all variables in $X$:
    \begin{align*}
      \sum_{\tau \in 2^X} A(\tau)
      & = \sum_{\tau \in 2^{X \setminus \set{x_n}}} (\exists_{x_n} A)(\tau) \\
      & = \sum_{\tau \in 2^{X \setminus \set{x_{n - 1}, x_n}}} (\exists_{x_{n - 1}} \exists_{x_n} A)(\tau) \\
      & \vdots \\
      & = \sum_{\tau \in 2^\sete} (\exists_{x_1} \ldots \exists_{x_{n - 1}} \exists_{x_n} A)(\tau) \\
      & = (\exists_{x_1} \ldots \exists_{x_{n - 1}} \exists_{x_n} A)(\sete) \\
      & = (\exists_X A)(\sete) \\
    \end{align*}
  When $A$ is the specific function $\varphi \mult W : 2^X \to \R$, we have:
    \begin{align*}
      \sum_{\tau \in 2^X} (\varphi \mult W)(\tau)
      & = (\exists_X (\varphi \mult W))(\sete)
    \end{align*}
  Finally:
    \begin{align*}
      (\exists_X (\varphi \mult W))(\sete)
      & = \sum_{\tau \in 2^X} (\varphi \mult W)(\tau) \tag{as above} \\
      & = \sum_{\tau \in 2^X} \varphi(\tau) \mult W(\tau) \tag{Definition \ref{def_add_mult}} \\
      & = W(\varphi) \tag{Definition \ref{def_wmc}}
    \end{align*}
\end{proof}

\subsection{Proof of Theorem \ref{theorem_early_project}}

\begin{proof}
  For every $\tau \in 2^{(X \cup Y) \setminus \set{x}}$, we have:
    \begin{align*}
      (\exists_x &(A \mult B)) (\tau) \\
      & = (A \mult B)(\tau) + (A \mult B)(\tau \cup \set{x})
        \tag{Definition \ref{def_add_project}} \\
      & = A(\tau \cap X) \mult B(\tau \cap Y)+ A((\tau \cup \set{x}) \cap X) \mult B((\tau \cup \set{x}) \cap Y)
        \tag{Definition \ref{def_add_mult}} \\
      & = A(\tau \cap X) \mult B(\tau \cap Y)+ A((\tau \cup \set{x}) \cap X) \mult B(\tau \cap Y)
        \tag{as $x \notin Y$} \\
      & = A(\tau \cap X) \mult B(\tau \cap Y)+ A(\tau \cap X \cup \set{x}) \mult B(\tau \cap Y)
        \tag{as $x \in X$} \\
      & = (A(\tau \cap X) + A(\tau \cap X \cup \set{x})) \mult B(\tau \cap Y)
        \tag{common factor} \\
      & = (\exists_x A)(\tau \cap X) \mult B(\tau \cap Y)
        \tag{Definition \ref{def_add_project}} \\
      & = (\exists_x A)(\tau \cap (X \setminus \set{x})) \mult B(\tau \cap Y)
        \tag{as $x \notin \tau$} \\
      & = ((\exists_x A) \mult B)(\tau)
        \tag{Definition \ref{def_add_mult}}
    \end{align*}
\end{proof}

\subsection{Proof of Theorem \ref{theorem_algo_correct}}

In order to prove Theorem \ref{theorem_algo_correct}, we first state and prove two invariants that hold during the loop at line \ref{line_loop2} of Algorithm \ref{algo_lwmc_cnf}.

First, we prove in the following lemma that the variables in $X_i$ never appear in the clusters $\kappa_j$ for every $i < j$.
\begin{lemma}
  \label{lemma_vars_disjoint}
  Assume the conditions of Theorem \ref{theorem_algo_correct}.
  At every step of the loop at line \ref{line_loop2} of Algorithm \ref{algo_lwmc_cnf} and for every $1 \leq i < j \leq m$, $X_i \cap \vars(\kappa_j) = \sete$.
\end{lemma}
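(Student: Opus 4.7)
The plan is to prove the invariant by induction on the iteration count $i$ of the loop at line \ref{line_loop2}. The base case is immediate: before the loop begins, the definition $X_i = \vars(\kappa_i) \setminus \bigcup_{p = i+1}^m \vars(\kappa_p)$ from the first loop of Algorithm \ref{algo_lwmc_cnf} directly gives $X_i \cap \vars(\kappa_j) = \sete$ for every $1 \le i < j \le m$. Note that the sets $X_i$ are fixed once the first loop terminates and are never modified afterwards; only the clusters $\kappa_j$ can change during the second loop.

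For the inductive step, I would assume the invariant holds at the start of iteration $i$ and track how $\kappa_j$ may change. During iteration $i$, the algorithm forms $A_i$ as the product of all ADDs in $\kappa_i$, so initially $\vars(A_i) \subseteq \vars(\kappa_i)$; the inner loop at line \ref{line_loop2_inner} then projects every $x \in X_i$ out of $A_i$ (modulo the harmless introduction of variables from the $W_x$, which lie in $X_i$ and are immediately projected), so after this inner loop $\vars(A_i) \subseteq \vars(\kappa_i) \setminus X_i$. The only cluster whose variable set changes is $\kappa_j$ with $j = \chooseCluster(A_i, i)$, and it changes to $\vars(\kappa_j) \cup \vars(A_i)$.

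It then suffices to verify $X_{i'} \cap \vars(A_i) = \sete$ for every $i' < j$, which I would split into three cases. For $i' < i$, the inductive hypothesis gives $X_{i'} \cap \vars(\kappa_i) = \sete$, and since $\vars(A_i) \subseteq \vars(\kappa_i)$, we are done. For $i' = i$, the explicit projection of $X_i$ from $A_i$ gives $X_i \cap \vars(A_i) = \sete$. For $i < i' < j$, Condition \ref{cond3} of Theorem \ref{theorem_algo_correct} directly yields $X_{i'} \cap \vars(A_i) = \sete$. Combined with the inductive hypothesis applied to $\kappa_j$ itself, this preserves the invariant for $\kappa_j$; every other cluster's variable set is unchanged, so the invariant is preserved there trivially.

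The main subtlety I expect is bookkeeping around the inner projection loop at line \ref{line_loop2_inner}: one must argue carefully that after projecting all $x \in X_i$ out of $A_i$ (including multiplying by each $W_x$), the resulting $\vars(A_i)$ is disjoint from $X_i$, since the $W_x$ themselves introduce variables in $X_i$ that are then immediately summed out. The other cases are bookkeeping over which clusters' variable sets can possibly change, and this reduces to observing that $\chooseCluster$ returns a single index $j$, so only $\kappa_j$ is affected.
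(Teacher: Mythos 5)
Your proof is correct and follows essentially the same route as the paper's: induction over the iterations of the second loop, with the observation that only line \ref{line_add_to_cluster} can threaten the invariant, followed by the identical three-way case split on $i' < i$ (inductive hypothesis plus $\vars(A_i) \subseteq \vars(\kappa_i)$), $i' = i$ (explicit projection at line \ref{line_loop2_inner}), and $i < i' < j$ (Condition \ref{cond3}). Your extra remark that the $W_x$ factors only introduce variables of $X_i$ that are immediately summed out is a detail the paper leaves implicit, but it does not change the argument.
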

\begin{proof}
  We prove this invariant by induction on the steps of the algorithm.
  The base case (\ie, immediately before line \ref{line_loop2}) follows from the initial construction of $X_i$.

  During the loop, the only potential problem is at line \ref{line_add_to_cluster}.
  In particular, consider an iteration $i < m$ where some ADD $A_i$ is added to $\kappa_j$ (where $j = \chooseCluster(A_i, i)$) and assume that the invariant holds before line \ref{line_add_to_cluster}.
  To prove that the invariant still holds after line \ref{line_add_to_cluster}, consider some $1 \leq s < j$.
  We prove by cases that $X_s \cap \vars(A_i) = \sete$:
  \begin{itemize}
      \item \textbf{Case $s < i$.} By the inductive hypothesis, we have $X_s \cap \vars(\kappa_i) = \sete$.
      Since $\vars(A_i) \subseteq \vars(\kappa_i)$, it follows that $X_s \cap \vars(A_i) = \sete$.
      \item \textbf{Case $s = i$.} All variables in $X_i$ are projected from $A_i$ during the loop at line \ref{line_loop2_inner}.
      Thus $X_i \cap \vars(A_i) = \sete$.
      \item \textbf{Case $s > i$.} Since $s < j$, it follows from Condition 3 of Theorem \ref{theorem_algo_correct} that $X_s \cap \vars(A_i) = \sete$.
  \end{itemize}
  By the inductive hypothesis, $X_i \cap \vars(B) = \sete$ for all other $B \in \kappa_j$.
  Hence $X_i \cap \vars(\kappa_j) = \sete$.
\end{proof}

Next, we use this invariant prove that the ADDs in $\bigcup_{j \geq i} \kappa_j$ always contain sufficient information to compute the weighted model count at iteration $i$.
\begin{lemma}
  \label{lemma_remaining_clusters}
  Assume the conditions of Theorem \ref{theorem_algo_correct} and let $Y_i = \bigcup_{j \geq i} X_i$.
  At the start of every iteration $i$ of the loop at line \ref{line_loop2} of Algorithm \ref{algo_lwmc_cnf},
  \begin{equation}
  \label{eq_remaining_clusters}
      W(\varphi) = \left( \exists_{Y_i} \pars{\prod_{\substack{j \geq i \\B \in \kappa_j}} B \mult \prod_{x \in Y_i} W_x} \right) ( \sete ).
  \end{equation}
\end{lemma}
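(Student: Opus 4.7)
The plan is to prove Equation \ref{eq_remaining_clusters} by induction on $i$. For the base case $i = 1$, note that $\{X_1, \ldots, X_m\}$ partitions $X$ (each variable $x \in X$ lands in $X_{i^*}$ for the largest $i^*$ with $x \in \vars(\kappa_{i^*})$) and, by Condition \ref{cond1} of Theorem \ref{theorem_algo_correct}, $\{\Gamma_1, \ldots, \Gamma_m\}$ partitions the clauses of $\varphi$. Hence $Y_1 = X$ and $\prod_{j \ge 1,\, B \in \kappa_j} B = \prod_{\gamma \in \varphi} \gamma$, so Equation \ref{eq_remaining_clusters} at $i = 1$ coincides with Equation \ref{eqn_computing_lwmc} and evaluates to $W(\varphi)$ by Theorem \ref{theorem_wmc}.

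For the inductive step, suppose Equation \ref{eq_remaining_clusters} holds at the start of some iteration $i < m$. If $\kappa_i = \sete$, then $X_i = \sete$ and no cluster is modified during iteration $i$, so $Y_{i+1} = Y_i$ and the invariant carries over unchanged. Otherwise, I would analyze the two updates performed in iteration $i$. Iterating Theorem \ref{theorem_early_project} and using that each $W_{x'}$ depends only on $x'$ (so that $\exists_x$ commutes past $W_{x'}$ whenever $x \ne x'$), the inner loop on line \ref{line_loop2_inner} yields $A_i = \exists_{X_i}\pars{\prod_{D \in \kappa_i} D \mult \prod_{x \in X_i} W_x}$. Line \ref{line_add_to_cluster} then places this $A_i$ in some $\kappa_j$ with $i < j \le m$ (Condition \ref{cond2}), so after iteration $i$ the ADD $A_i$ appears as a factor in $\prod_{j \ge i+1,\, B \in \kappa_j} B$ under the updated cluster configuration.

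To close the step, I would rewrite the invariant at iteration $i$ into the invariant at iteration $i+1$. Split $Y_i = X_i \cup Y_{i+1}$ (disjoint by construction of $X_i$) and separate the two large products into pieces indexed by $(\kappa_i, X_i)$ and by $(\bigcup_{j > i} \kappa_j,\, Y_{i+1})$. By Lemma \ref{lemma_vars_disjoint}, no ADD in $\kappa_j$ for $j > i$ contains any variable in $X_i$; neither does any $W_x$ for $x \in Y_{i+1}$. Theorem \ref{theorem_early_project} then lets me pull $\exists_{X_i}$ inward past every factor outside the $(\kappa_i, X_i)$ block, collapsing that block into precisely the $A_i$ computed above; substituting $A_i$ and absorbing it into its host cluster yields Equation \ref{eq_remaining_clusters} at iteration $i+1$. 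The main technical obstacle is this Early-Projection rearrangement: one must carefully track the disjointness of variable sets (where Lemma \ref{lemma_vars_disjoint} is essential) and verify that the piecewise inner loop really accumulates into a single $\exists_{X_i}$ over a product, which is where the univariate nature of each $W_x$ does the work.
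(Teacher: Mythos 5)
Your proposal is correct and follows essentially the same route as the paper's proof: induction on $i$, the base case via Condition \ref{cond1} and Theorem \ref{theorem_wmc}, and the inductive step using Lemma \ref{lemma_vars_disjoint} together with Theorem \ref{theorem_early_project} to move $\exists_{X_i}$ across the factors disjoint from $X_i$ and identify the collapsed block with $A_i$. The only (immaterial) difference is direction of presentation --- you push $\exists_{X_i}$ inward from the iteration-$i$ invariant, while the paper substitutes $A_i$ into the iteration-$(i+1)$ expression and pulls $\exists_{X_i}$ outward.
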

\begin{proof}
  We prove this invariant by induction on $i$.

  We first consider iteration $i=1$.
  It follows from Condition 1 of Theorem \ref{theorem_algo_correct} that $\bigcup_{j \geq 1} \Gamma_j = \varphi$.
  Thus $Y_1 = \vars(\varphi) = X$ and moreover $\prod_{j \geq 1} \prod_{B \in \kappa_j} B = \prod_{\gamma \in \varphi} \get{clause-ADD}(\gamma, \pi)$.
  Equation \ref{eq_remaining_clusters} therefore follows directly from Theorem \ref{theorem_wmc}.

  Next, assume that Equation \ref{eq_remaining_clusters} holds at the start of some iteration $i < m$ and consider Equation \ref{eq_remaining_clusters} at the start of iteration $i+1$.
  For convenience, let $\kappa_j$ refer to its value at the start of iteration $i$ and let $\kappa_j'$ refer to the value of $\kappa_j$ at the start of iteration $i+1$ (for all $j \geq i$).

  If $\kappa_i = \sete$, then $\kappa_i$ does not contribute to Equation \ref{eq_remaining_clusters}, so Equation \ref{eq_remaining_clusters} remains unchanged (and thus still holds) at the start of iteration $i+1$.
  If $\kappa_i \neq \sete$, then after lines 10-12 Algorithm \ref{algo_lwmc_cnf} computes $A_i = \exists_{X_i} \left( \prod_{D \in \kappa_i} D \mult \prod_{x \in X_i} W_x \right)$ (using Theorem \ref{theorem_early_project} to rearrange terms).
  By Condition 2 of Theorem \ref{theorem_algo_correct}, $A_i$ is then placed in $\kappa_j'$ for some $i < j \leq m$.
  Therefore, at the start of iteration $i+1$ we have
  \begin{align*}
    & \exists_{Y_{i+1}} \pars{\prod_{\substack{j \geq i+1 \\B \in \kappa_j'}} B \mult \prod_{x \in Y_{i+1}} W_x} \\
    = & \exists_{Y_{i+1}} \pars{A_i \mult \prod_{\substack{j \geq i+1 \\B \in \kappa_j}} B \mult \prod_{x \in Y_{i+1}} W_x}.
  \end{align*}

  Plugging in the value of $A_i$, this is equal to
  $$\exists_{Y_{i+1}} \pars{\left( \exists_{X_i} \prod_{D \in \kappa_i} D \mult \prod_{x \in X_i} W_x \right) \mult \prod_{\substack{j \geq i+1 \\B \in \kappa_j}} B \mult \prod_{x \in Y_{i+1}} W_x}.$$

  Notice $Y_i$ is the disjoint union of $Y_{i+1}$ and $X_i$.
  Thus $X_i \cap \vars(W_x) = \sete$ for all $x \in Y_{i+1}$.
  Moreover, by Lemma \ref{lemma_vars_disjoint} $X_i \cap \vars \left( \prod_{j \geq i+1} \prod_{B \in \kappa_j} B \right) = \sete$.
  It thus follows from Theorem \ref{theorem_early_project} that
  \begin{align*}
    & \exists_{Y_{i+1}} \pars{\prod_{\substack{j \geq i+1 \\B \in \kappa_j'}} B \mult \prod_{x \in Y_{i+1}} W_x} \\
    = & \exists_{Y_{i+1}} \pars{\exists_{X_i} \prod_{\substack{j \geq i \\B \in \kappa_j}} B \mult \prod_{x \in Y_{i}} W_x}.
  \end{align*}

  By the inductive hypothesis, this ADD is exactly $W(\varphi)$ when evaluated at $\sete$.
  It follows that Equation 3 holds at the start of iteration $i+1$ as well.
\end{proof}

Given this second invariant, the proof of Theorem \ref{theorem_algo_correct} is straightforward.
\begin{proof}
  By Lemma \ref{lemma_remaining_clusters}, at the start of iteration $m$ we know that
  \begin{align*}
    W(\varphi) = \left( \exists_{X_m} \pars{\prod_{B \in \kappa_m} B \mult \prod_{x \in Y_m} W_x} \right)(\sete).
  \end{align*}
  Since
  \begin{align*}
    A_m = \exists_{X_m} \pars{\prod_{B \in \kappa_m} B \mult \prod_{x \in Y_m} W_x},
      \tag{using Theorem \ref{theorem_early_project} to rearrange terms}
  \end{align*}
  it follows that line \ref{line_return} returns $W(\varphi)$.
\end{proof}



\end{document}